\newcommand{\NP}{\ensuremath{\mathsf{NP}}}
\newcommand{\PPAD}{\ensuremath{\mathsf{PPAD}}}
\newcommand{\PLS}{\ensuremath{\mathsf{PLS}}}
\newcommand{\TFNP}{\ensuremath{\mathsf{TFNP}}}
\newcommand{\FIXP}{\ensuremath{\mathsf{FIXP}}}
\newcommand{\FIXPA}{\ensuremath{\mathsf{FIXP}_a}} 
\def\nat{{\mathbb N}}
\def\int{{\mathbb Z}}
\def\real{{\mathbb R}}
\def\rat{{\mathbb Q}}
\begin{document}

\title{The complexity of approximating a trembling hand perfect equilibrium 
of a multi-player game in strategic form}

\author{Kousha Etessami \inst{1} \and Kristoffer Arnsfelt Hansen \inst{2} \and Peter Bro Miltersen\inst{2} \and Troels Bjerre S{\o}rensen \inst{3}}

\institute{
University of Edinburgh. \email{kousha@inf.ed.ac.uk} \and
Aarhus University. \email{\{arnsfelt,bromille\}@cs.au.dk} \and
IT-University of Copenhagen. \email{trbj@itu.dk}
}

\maketitle

\begin{abstract}
 We consider the task of computing an approximation of a trembling
  hand perfect equilibrium for an $n$-player game in strategic
  form, $n \geq 3$. We show that this task is complete for the complexity class
  \FIXPA. In particular, the task is polynomial time equivalent to the
  task of computing an approximation of a Nash equilibrium 
in strategic form games with three (or more) players.
\end{abstract}

\section{Introduction}
Arguably \cite{vanDamme:1991:SPNE}, the most important {\em refinement} 
of Nash equilibrium for finite games in strategic
form (a.k.a. games in normal form, i.e., games given by their tables of payoffs)  is Reinhard Selten's \cite{Selten:1975:PCEE} notion of 
{\em trembling hand perfection}. The set of trembling hand perfect
equilibria of a game is a non-empty subset of the Nash equilibria of that game.
Also, many ``unreasonable'' Nash equilibria of many games, e.g., those relying on ``empty threats" in equivalent extensive forms of those games, are not trembling hand
perfect, thus motivating and justifying the notion. The importance of the notion is illustrated by the fact that Selten received the Nobel prize in economics together with Nash (and Harsanyi), ``for their pioneering analysis of equilibria in the theory of non-cooperative games". In this paper, we study the computational complexity of finding trembling hand perfect equilibria of games given in strategic form.

The computational complexity of finding a {\em Nash} equilibrium of a
game in strategic form is well-studied. When studying this
computational task, we assume that the game given as input is
represented as a table of integer (or rational) payoffs, with each
payoff given in binary notation. The output is a strategy profile,
i.e., a family of probability distributions over the strategies of
each player, with each probability being a rational number with
numerator and denominator given in binary notation. The computational
task is therefore discrete and we are interested in the Turing machine
complexity of solving it. Papadimitriou \cite{Papadimitriou:1990:GTL}
showed that for the case of two players, the problem of computing an
{\em exact} Nash equilibrium is in \PPAD, a natural complexity class
introduced in that paper, as a consequence of the Lemke-Howson
algorithm \cite{Lemke:1965:BEP} for solving this task.  For the case
of three or more players, there are games where no Nash equilibrium which uses
only rational probabilities exists \cite{Nash:1951:NCG}, and hence
considering some relaxation of the notion of ``computing'' a Nash equilibrium is
necessary to stay within the discrete input/output framework outlined above. In
particular, Papadimitriou showed that the problem of computing an
$\epsilon$-Nash equilibrium, with $\epsilon > 0$ given as part of the
input in binary notation, is also in \PPAD, as a consequence of 
Scarf's algorithm \cite{Scarf67} for solving this task. Here, an
$\epsilon$-Nash equilibrium is a strategy profile where no player can
increase its utility by more than $\epsilon$ by deviating. In breakthrough
papers, Daskalakis {\em et al.} \cite{Daskalakis:2006:CCNE} and Chen
and Deng \cite{Chen:2006:SCTN} showed that both tasks are also hard
for \PPAD, hence settling their complexity: Both are
\PPAD-complete. Subsequently, Etessami and Yannakakis \cite{EY07}
pointed out that for some games, $\epsilon$-Nash equilibria can be so
remote from any exact Nash equilibrium (unless $\epsilon$ is so small
that its binary notation has encoding size exponential in the size of the
game), that the former tells us little or nothing about the latter. For
such games, the $\epsilon$-Nash relaxation is a bad proxy for Nash
equilibrium, assuming the latter is what we are actually interested
in computing. Motivated by this, they suggested a different
relaxation: Compute a strategy profile with $\ell_\infty$-distance at
most $\delta$ from an exact Nash equilibrium, with $\delta > 0$ again
given as part of the input in binary notation. In other words, {\em
  compute an actual Nash equilibrium to a desired number of
  bits of accuracy}. They showed that this problem is complete for a
natural complexity class \FIXPA\ that they introduced in the same
paper. 
Informally, \FIXPA\  
is the class of discrete search problems that can be
reduced to approximating (within desired $\ell_\infty$-distance) 
any one of the Brouwer fixed points of a function given
by an algebraic circuit using gates: $+,-,*,/,\max,\min$.
(We will formally define \FIXPA\ later.)

In this paper, we want to similarly understand the case of trembling hand perfect equilibrium. For the case of two players, the problem of computing an exact trembling hand perfect equilibrium is \PPAD-complete. 
This follows from a number of known exact pivoting algorithms for computing refinements of this notion \cite{vonStengel:2002:CNFP,Miltersen:2008:CQE,TroelsEC}.  For the case of three or more players, we are not aware of any natural analogue of the notion of $\epsilon$-Nash equilibrium as an approximate proxy for a trembling hand perfect equilibrium.\footnote{The 
already studied notion of an $\epsilon$-perfect equilibrium ($\epsilon$-PE), which we 
discuss later, 
does {\em not} qualify as such an analogue: For some three-player games, every $\epsilon$-PE uses 
irrational probabilities, and thus  ``computing'' an (exact) $\epsilon$-PE is just as problematic as computing an exact NE.
Indeed, the notion of a $\epsilon$-PE is used as a technical step towards the definition of 
trembling hand perfect equilibrium, 
rather than as a natural ``numerical relaxation" of this notion.}
Thus, we only discuss in this paper the approximation notion of Etessami and Yannakakis. The main result of the present paper is the following:

\begin{theorem}\label{thm-main}The following computational task is $\FIXPA$-complete for any $n \geq 3$: Given an integer payoff table 
for an $n$-player game $\Gamma$, and a rational $\delta > 0$, with all numbers given in standard binary notation, compute (the binary representation of) a strategy profile $x'$ with rational probabilities having 
$\ell_\infty$ distance at most $\delta$ to a
trembling hand perfect equilibrium 
of $\Gamma$.
\end{theorem}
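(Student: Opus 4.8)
The plan is to establish the two inclusions separately; they are of very unequal difficulty, the hardness being essentially free and the membership carrying all of the content.

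\emph{$\FIXPA$-hardness.} The set of trembling hand perfect equilibria of a game is always a non-empty subset of its set of Nash equilibria, so any profile within $\ell_\infty$-distance $\delta$ of a trembling hand perfect equilibrium of $\Gamma$ is in particular within distance $\delta$ of a Nash equilibrium of $\Gamma$. Thus, for $n=3$, the identity reduction already reduces the $\FIXPA$-hard problem of $\delta$-approximating a Nash equilibrium of a $3$-player game (Etessami and Yannakakis~\cite{EY07}) to our problem; for larger $n$ we pad $\Gamma$ with $n-3$ dummy players, each with a single strategy and constant payoff $0$, which changes neither the Nash equilibria nor the trembling hand perfect equilibria apart from the forced choice at the dummy coordinates.

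\emph{Membership in $\FIXPA$.} Here we use that the theorem only asks for an approximation of \emph{some} trembling hand perfect equilibrium. Write $\Gamma(\eta)$ for the perturbed game in which each player $i$ must place probability at least $\eta$ on every pure strategy. By Selten's perturbed-game characterisation~\cite{Selten:1975:PCEE}, every accumulation point, as $\eta\downarrow 0$, of Nash equilibria of $\Gamma(\eta\cdot\mathbf{1})$ is a trembling hand perfect equilibrium of $\Gamma$, and such accumulation points exist by compactness. For any fixed $\eta$, the game $\Gamma(\eta\cdot\mathbf{1})$ is affinely equivalent, via $\sigma_i(s)=\eta+(1-|S_i|\eta)\,\tau_i(s)$ with $\tau_i\in\Delta(S_i)$, to an ordinary strategic-form game $\Gamma'_\eta$ whose rational payoffs are computable in time polynomial in the bit-sizes of $\Gamma$ and $\eta$; feeding $\Gamma'_\eta$ into the standard construction that exhibits the Nash equilibria of a game as the fixed points of an algebraic circuit over $\{+,-,*,/,\max,\min\}$ (as in~\cite{EY07}) reduces the computation of a Nash equilibrium of $\Gamma(\eta\cdot\mathbf{1})$ to a single $\FIXPA$ instance. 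The algorithm is then: choose a suitably small $\eta$; build the circuit and approximate one of its fixed points to accuracy $\delta/2$; push the result back through the (non-expansive) affine map; and output it. The construction is uniform in $n$, and the only non-routine ingredient is the choice of $\eta$.

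\emph{The main obstacle} is exactly that choice. We need $\eta$ small enough that \emph{every} Nash equilibrium of $\Gamma(\eta\cdot\mathbf{1})$ lies within $\ell_\infty$-distance $\delta/2$ of the set of trembling hand perfect equilibria of $\Gamma$. Qualitatively, any sufficiently small $\eta$ works (otherwise a sequence $\eta_k\downarrow 0$ whose perturbed equilibria all stay $\delta/2$-far from every trembling hand perfect equilibrium would, by compactness, accumulate at one, a contradiction); the hard part is to make this quantitative, i.e.\ to exhibit such an $\eta$ of the form $2^{-p(|\Gamma|,|\delta|)}$ for a fixed polynomial $p$, which is what it takes for the circuit above to be constructible in polynomial time. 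This calls for effective semialgebraic estimates: the graph $\{(\eta,\sigma):0<\eta\le 1/\max_i|S_i|,\ \sigma\in\mathrm{NE}(\Gamma(\eta\cdot\mathbf{1}))\}$ is semialgebraic of complexity polynomial in $|\Gamma|$, so along each branch $\sigma$ is a Puiseux series in $\eta$ whose leading exponent has controlled denominator and whose leading coefficient has controlled bit-size, and one must get sharp enough control — exploiting the complementarity structure of the perturbed equilibrium conditions, not merely crude B\'ezout degree bounds — to keep the resulting threshold on $\eta$ only singly exponential. Should the perturbed-game route resist such a bound, the alternative is to bypass it and encode the infinitesimal tremble directly: a trembling hand perfect equilibrium is witnessed by a totally mixed direction $\mu$ together with the demand that every supported pure strategy of each player remain a best response along $(1-\epsilon)\sigma+\epsilon\mu$ as $\epsilon\downarrow 0$, which by multilinearity is a lexicographic system of polynomial inequalities in $(\sigma,\mu)$ with only $O(n)$ levels, and one then realises a non-empty subset of its solution set as the fixed-point set of an algebraic circuit. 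Either way, once the encoding is settled the remainder — the best-response $\max$-gates, the affine reparametrisation, composing these into one circuit, and propagating the approximation error — is routine.
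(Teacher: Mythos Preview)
Your overall architecture matches the paper's: hardness is immediate from $\mathrm{PE}\subseteq\mathrm{NE}$, and membership goes by (i) expressing the equilibria of an $\eta$-perturbed game as fixed points of an algebraic circuit, (ii) an ``almost implies near'' argument showing that for small enough perturbation parameter every such equilibrium is $\delta/2$-close to an actual PE, and (iii) plugging a sufficiently small parameter into the circuit. The paper does exactly this, with $\epsilon$-PEs in the role of your perturbed-game equilibria and a direct fixed-point map $F^\epsilon_\Gamma$ in the role of your $\Gamma'_\eta$ construction.

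The genuine gap is in your quantitative analysis of step~(ii) and its interaction with step~(iii). You assert that what is required is $\eta=2^{-p(|\Gamma|,|\delta|)}$ for a polynomial $p$, and then propose fine-grained Puiseux-series control ``exploiting the complementarity structure\ldots not merely crude B\'ezout degree bounds'' to achieve it. This misidentifies the target. Crude quantifier elimination applied to the first-order ``almost implies near'' sentence (this is literally what the paper does, citing Basu--Pollack--Roy) yields only $\eta^*\approx\min(\delta,1/B)^{n^{O(m^3)}}$, a \emph{doubly} exponentially small number with \emph{exponential} bit-length, and there is no evident reason to expect the singly-exponential bound you are aiming for. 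But the doubly exponential bound already suffices: the trick you are missing is that $\eta$ need not enter the circuit as a rational constant. If you make the fixed-point circuit take $\eta$ as an \emph{input gate}---so that the payoffs of your $\Gamma'_\eta$ are computed by arithmetic gates inside the circuit rather than precomputed externally---then a sub-circuit of $\lceil cm^3\lg n\rceil$ multiplication gates manufactures $\eta^*=\min(\delta/2,1/B)^{2^{\lceil cm^3\lg n\rceil}}$ by repeated squaring, and the composite circuit is still polynomial size. With this ``virtual infinitesimal'' in place, the off-the-shelf real-algebraic bound is all that is needed, and both your proposed Puiseux sharpening and your fallback lexicographic alternative become unnecessary.
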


As an immediate corollary of our main theorem, and the results of
Etessami and Yannakakis, we have that {\em approximating a Nash
  equilibrium and approximating a trembling hand perfect equilibrium
  are polynomial time equivalent tasks}. In particular, there is a
polynomial time algorithm that finds an approximation to a trembling
hand perfect equilibrium of a given game, using access to any oracle
solving the corresponding approximation problem for the case of Nash
equilibrium. To put this result in perspective, we note that Nash
equilibrium and trembling hand perfect equilibrium are computationally
quite different in other respects: if instead of {\em finding} an
equilibrium, we want to {\em verify} that a given strategy profile is
such an equilibrium, the case of Nash equilibrium is trivial, while
the case of trembling hand perfect equilibrium for games
with 3 (or more) players is \NP-hard
\cite{HMS10}.  This might lead one to believe that approximating a
trembling hand perfect equilibrium for games
with 3 or more players is likely to be harder than
approximating a Nash equilibrium, but we show that this is not the
case.
\subsection{About the proof}

Informally (for formal definitions, see below), \FIXP\ (resp., \FIXPA)
is defined as the complexity class of search problems that can be cast
as exactly computing (resp., approximating) a {\em Brouwer fixed
  point} of functions represented by circuits over basis
$\{+,*,-,/,\max,\min \}$ with rational constants. It was established
in \cite{EY07} that computing (resp., approximating) an actual Nash
Equilibrium (NE) for a finite $n$-player game is \FIXP-complete
(resp., \FIXPA-complete), already for $n = 3$. Since trembling hand
perfect equilibria constitute a refinement of Nash Equilibria, to
show that approximating a trembling hand perfect equilibrium is
$\FIXPA$-complete, we merely have to show that this task is in
$\FIXPA$.

An $\epsilon$-perfect equilibrium ($\epsilon$-PE for short) is defined to be a fully mixed strategy profile, $x$, where every strategy $j$ of every player $i$ that is played with  probability $x_{i,j} > \epsilon$ must be a best response to the other player's strategies
$x_{-i}$. Then, a trembling hand perfect equilibrium (PE for short) is defined to be a limit point of a sequence of
$\epsilon$-PEs, for $\epsilon > 0$, $\epsilon \rightarrow 0$.
Here, by limit point we mean, as usual, any point to which a subsequence
 of the sequence converges. Such a point must exist, by the Bolzano-Weierstrass theorem.

In rough outline, 
our proof that approximating a PE is in $\FIXPA$ 
has the following structure:

\begin{enumerate}

\item  We first define (in section \ref{Kousha}) for any $n$-player game $\Gamma$,
a map, 
$F^{\epsilon}_{\Gamma}$,
parameterized by a 
parameter $\epsilon > 0$, so that  $F^{\epsilon}_{\Gamma}$ defines
a map from $D^{\epsilon}_{\Gamma}$ to itself, where $D^{\epsilon}_{\Gamma}$ denotes the space 
of fully mixed strategy strategy profiles $x$ 
such that every player 
plays each strategy with probability at least $\epsilon$. Also,
$F^{\epsilon}_{\Gamma}(x)$ is described by a $\{+,-,*,\min,\max\}$-circuit
with $\epsilon$ as one of its inputs.
 In particular, the Brouwer fixed point theorem applies to this map.
We show that the circuit defining $F^{\epsilon}_{\Gamma}$ can
be computed in polynomial time from the input game instance $\Gamma$, and
that  every Brouwer fixed point of $F^{\epsilon}_{\Gamma}$
is an $\epsilon$-PE of the original game $\Gamma$, making crucial use of, and
modifying, 
a new fixed point
characterization of NEs that was defined and used in \cite{EY07}.
\item 
We then show (in section \ref{Kristoffer}) that if $\epsilon^* > 0$ is made sufficiently small as a function
of the encoding size $| \Gamma |$ of the game $\Gamma$,
and of a parameter $\delta > 0$,  
specifically if $\epsilon^* \leq \delta^{2^{g(|\Gamma|)}}$, 
where $g$ is some polynomial, then any $\epsilon^*$-PE
must be $\delta$-close (in the $l_\infty$-norm) to an actual PE. This part of the proof relies on real algebraic geometry.
\item
We then observe (in section \ref{wrapup}) that for any desired
$\delta$,
we can encode such a sufficiently small $\epsilon^* > 0$
as a circuit that is polynomially large in the encoding
size of $\Gamma$ and $\delta$, simply by {\em repeated squaring}. We think of this as constructing a {\em virtual infinitesimal} and believe that this technique will have many other applications in the context of proving $\FIXPA$ membership using real algebraic geometry.
Finally, plugging in the circuit for $\epsilon^*$ for the input $\epsilon$ in the circuit for $F^{\epsilon}_{\Gamma}$, we obtain a Brouwer function  
$F^{\epsilon^*}_{\Gamma}(x)$, defined by a $\{+,-,*,\max,\min\}$-circuit,
such that any fixed point of $F^{\epsilon^*}_{\Gamma}(x)$ is guaranteed
to be a fully mixed strategy profile, $x^*_{\epsilon^*}$, that is 
also within $l_\infty$ distance $\delta$ of a PE, $x^*$, of $\Gamma$. The triangle inequality completes the proof.

\end{enumerate}

\section{Definitions and Preliminaries}

\label{sec:definitions}

\subsection{Game-theoretic notions}
We use $\rat_+$ to denote the set of positive rational numbers. 
A finite $n$-player normal form game, 
$\Gamma = (N, \langle S_i \rangle_{i \in N}, \langle  u_i \rangle_{i \in N} )$,
consists of a set $N = \{1,\ldots,n\}$ of $n$ players indexed by their
number, a set of $n$ (disjoint) finite sets of {\em pure strategies},
$S_i$, one for each player $i \in N$,  
and $n$ rational-valued {\em payoff functions} $u_i : S \rightarrow \rat$,
from the product strategy space $S = \Pi^n_{i=1} S_i$ to $\rat$.

The elements of $S$, i.e., combinations of pure
strategies, one for each player, are called {\em pure strategy profiles}.
The assumption of rational values is for computational purposes.
Each rational number $r$ is represented as usual by its 
numerator and denominator in binary,
and we use $size(r)$ to denote the number of bits in the representation.
The size $|\Gamma |$ of the instance (game) $\Gamma$ is the total number of bits
needed to represent all the information in the game: the strategies of
all the players and their payoffs for all $s \in S$.

A {\em mixed strategy}, $x_i$, for a player $i$ is a probability distribution
on its set $S_i$ of pure strategies. Letting $m_i=|S_i|$, 
we view $x_i$ as a real-valued vector $x_i = (x_{i,1}, \ldots, x_{i,m_i})  
\in [0,1]^{m_i} $,  where $x_{i,j}$ denotes the probability with 
which player $i$ plays pure strategy $j$ in the mixed strategy $x_i$.
Note that we must have  $x_i \geq 0$ and $\sum^{m_i}_{i=1} x_{i,m_i} = 1$. That is,
a vector $x_i$ is  a mixed strategy of player $i$
iff it belongs to the {\em unit simplex} 
$\Delta_{m_i} = \{ y \in R^{m_i} | y \geq 0; \sum_{j=1}^{m_i} y_j =1 \}$.
We use the notation $\pi_{i,j}$ to identify
the pure strategy $j$ of player $i$, as well as its representation 
as a mixed strategy that assigns probability 1 to strategy $j$ and probability 
0 to the other strategies of player $i$. 

A {\em mixed strategy profile} $x = (x_1,\ldots,x_n)$ 
is a combination of mixed strategies
for all the players. That is, vector $x$ is a mixed strategy profile
iff it belongs to the product of the $n$ unit simplexes
for the $n$ players, 
$\{ x \in R^{m} \mid x \geq 0;  \sum_{j=1}^{m_i} x_{i,j} =1$ for 
$i=1,\ldots,k \}$.
We let $D_{\Gamma }$ denote the set of all mixed profiles
for game $\Gamma$.
The profile is {\em fully mixed}
if all the pure strategies of all players have nonzero probability.
We use the notation $x_{-i}$  to denote the subvector of $x$ induced by
the pure strategies of all players except for player $i$.
If $y_i$ is a mixed strategy of player $i$,
we  use $(y_i;x_{-i})$ to denote the mixed profile where 
everyone plays the same strategy as $x$ except for player $i$, who plays
mixed strategy $y_i$. 

The payoff function of each player can be extended from pure strategy profiles
to mixed profiles, and we will use $U_i$ to
denote the expected payoff function for player $i$.
Thus the (expected) payoff $U_i(x)$ of mixed profile $x$ for player $i$ is 
$\sum x_{1,j_1} \ldots x_{k,j_k} u_i(j_1,\ldots,j_k)$ 
where the sum is over all pure strategy profiles $(j_1,\ldots,j_k) \in S$.
 
A {\em Nash equilibrium} (NE) is a (mixed) strategy profile $x^*$ such that
all $i =1, \ldots,n$ and every mixed strategy 
$y_i$ for player $i$,
$U_i(x^*) \geq U_i(y_i;x^*_{-i})$. 
It is sufficient to check
switches to pure strategies only, i.e.,
$x^*$ is a NE iff $U_i(x^*) \geq U_i(\pi_{i,j};x^*_{-i})$ 
for every pure strategy $j \in S_i$, for
each player $i=1,\ldots,n$.
Every finite game has at least one NE \cite{Nash:1951:NCG}.

A mixed profile $x$ is called a {\em $\epsilon$-perfect equilibrium}
($\epsilon$-PE) if it is 
(a)  fully mixed, i.e., $x_{i,j} > 0$ for all $i$,
and (b),  for every player $i$ and pure strategy $j$, if $x_{i,j} > \epsilon$,
then the pure strategy $\pi_{i,j}$ is a best response for player $i$ 
to $x_{-i}$.
We call a mixed profile $x^*$,  a  {\em trembling hand perfect equilibrium} (PE) of $\Gamma$
if it is a limit point of $\epsilon$-PEs of the game $\Gamma$.  In other words,
we call $x$ a PE if there exists a sequence $\epsilon_k > 0$, such that 
$\lim_{k \rightarrow \infty} \epsilon_k = 0$, and such that  
for all $k$ there is a corresponding $\epsilon_k$-PE, $x^{\epsilon_k}$
of $\Gamma$, such that $\lim_{k \rightarrow \infty} x^{\epsilon_k} = x^*$.
Every finite game has at least one PE, and all PEs are NEs \cite{Selten:1975:PCEE}.

\subsection{Complexity theoretic notions}

A {\em $\{+,-,*,\max,\min\}$-circuit} is a circuit with inputs
$x_1, x_2, \ldots, x_n$, as well as rational constants,
and a finite number of (binary) computation gates taken from $\{+,-,*,\min,\max\}$, 
with a subset of the computation gates labeled $\{o_1, o_2, \ldots, o_m\}$ and called output 
gates.\footnote{ Note that the gates $\{+,-,*,\min,\max\}$ are of course redundant:
gates $\{+,*,\max\}$ with rational constants are equally expressive.}

All circuits of this paper are $\{+,-,*,\min,\max\}$-circuits, so we shall often just write ``circuit" for ``$\{+,-,*,\min,\max\}$-circuit". A circuit computes a continuous function from $\real^n \rightarrow \real^m$ (and $\rat^n \rightarrow \rat^m$) in the natural way. Abusing notation slightly, we shall often identify the circuit with the function it computes.

By a (total) {\em multi-valued function}, $f$, with domain $A$ and co-domain $B$,
we mean a function that maps each $a \in A$ to a non-empty subset 
$f(a) \subseteq B$.  
We use $f: A \twoheadrightarrow B$ to denote such a function. Intuitively, when considering a multi-valued function as a computational problem, we are interested in producing just one of the elements of $f(a)$ on input $a$, so we refer to $f(a)$ as the set of {\em allowed outputs}.
A multi-valued function  $f: \{0,1\}^* \twoheadrightarrow \real^*$ is said to be in $\FIXP$ if there is a polynomial time computable map, $r$, that maps each instance $I \in \{0,1\}^*$ of $f$ to $r(I)= \langle1^{k^I}, 1^{d^I}, P^I, C^I, a^I, b^I \rangle$, where
\begin{itemize}
\item{}$k^I, d^I$ are positive integers and $a^I, b^I \in \rat^{d^I}$.
\item{}$P^I$ is a convex polytope in $\real^{k^I}$, given as a set of 
linear inequalities with rational coefficients.
\item{}$C^I$ is a circuit which maps $P^I$ to itself.
\item{}$\phi^I: \{1,\ldots,d^I\} \rightarrow \{1, \ldots, k^I\} $ is a finite function given by its table.
\item{}$f(I) = \{(a^I_i y_{\phi^I(i)} + b^I_i)_{i=1}^{d^I} \mid y \in P^I \: \wedge \: C^I(y) = y\}$. 
Note that $f(I) \not = \emptyset$, by Brouwer's fixed point theorem.
\end{itemize}
The above is in fact one of many equivalent characterizations of $\FIXP$ \cite{EY07}.
Informally, $\FIXP$ are those real vector multi-valued functions, with discrete inputs, 
that can be cast as Brouwer fixed point computations.
A multi-valued function $f:  \{0,1\}^* \twoheadrightarrow \real^*$ is said to be $\FIXP${\em -complete} if:
\begin{enumerate}
\item{}$f \in \FIXP$, and
\item{}for all $g \in \FIXP$, there is a polynomial time computable map, mapping instances $I$ of $g$ to $\langle y^I, 1^{k^I}, 1^{d^I}, \phi^I, a^I, b^I \rangle$, where $y^I$ is an instance of $f$, $k^I$ and $d^I$ are positive integers, $\phi^I$ maps $\{1,\ldots, d^I\}$ to $\{1,\ldots,k^I\}$,  $a^I$ and $b^I$ are 
$d^I$-tuples with rational entries, so that 
$g(I) \supseteq  \{ (a^I_i z_{\phi^I(i)} + b^I_i)^{d^I}_{i=1} \mid 
z \in f(y^I) \}$.  In other words, for any allowed output $z$ of $f$ on input $y^I$, 
the vector $(a^I_i z_{\phi^I(i)} + b^I_i)^{d^I}_{i=1}$ is an allowed output of $g$ on input $I$. 
\end{enumerate}
Etessami and Yannakakis \cite{EY07} showed that the multi-valued function
which maps games in strategic form to their Nash equilibria is $\FIXP$-complete.\footnote{To view the Nash equilibrium problem as a total multi-valued function, $f_{\mbox{\rm \tiny Nash}}: \{0,1\}^* \twoheadrightarrow \real^*$, we can view all strings in $\{0,1\}^*$ as encoding some game, by viewing ``ill-formed" input strings as encoding a fixed trivial game.} 

Since the output of a $\FIXP$ function consists of real-valued vectors, and as there are 
circuits whose fixed points are all irrational, a $\FIXP$ function is not directly computable by a Turing machine, and the class is therefore not directly comparable with standard complexity classes of total search problems 
(such as \PPAD, \PLS, or \TFNP). 
This motivates the following definition of the discrete class $\FIXPA$, also from \cite{EY07}.
A multi-valued function $f: \{0,1\}^* \twoheadrightarrow \{0,1\}^*$ (a.k.a. a totally defined discrete search problem) is said to be in $\FIXPA$ if there is a function $f' \in \FIXP$,
and polynomial time computable maps $\delta: \{0,1\}^* \rightarrow \rat_+$ and $g: \{0,1\}^* \rightarrow \{0,1\}^*$, such that for all instances $I$, 
\[ f(I) \supseteq \{ \: g(\langle I, y \rangle) \mid  y \in \rat^* \: \wedge \:  
\exists y' \in f'(I): \:  \|y-y'\|_ \infty \leq \delta(I)  \: \}. \] 
Informally, $\FIXPA$ are those totally defined discrete search problems that reduce to approximating exact Brouwer fixed points.
A multi-valued function $f: \{0,1\}^* \twoheadrightarrow \{0,1\}^*$ is said to be $\FIXPA$-{\em complete} if:
\begin{enumerate}
\item{}$f \in \FIXPA$, and
\item{}For all $g \in \FIXPA$, there are polynomial time computable maps 
$r_1, r_2:\{0,1\}^* \rightarrow \{0,1\}^*$,
such that 
$g(I) \supseteq \{ \: r_2(\langle I,z \rangle)  \mid   z \in f(r_1(I)) \: \}$.
\end{enumerate}
Etessami and Yannakakis showed that the multi-valued function 
that maps pairs $\langle \Gamma, \delta \rangle$,
where $\Gamma$ is a strategic form game and $\delta > 0$, to the set of rational $\delta$-approximations   
(in $\ell_\infty$-distance) of Nash equilibria of $\Gamma$, is $\FIXPA$-complete.

\section{Computing $\epsilon$-PEs in \FIXP}
\label{Kousha}
Given a game $\Gamma$,
let $m = \sum_{i \in N} m_i$ denote the total number
of pure strategies of all players in $\Gamma$. For $\epsilon > 0$, 
let $D^{\epsilon}_{\Gamma} \subseteq D_{\Gamma}$ denote the polytope of fully mixed profiles
of $\Gamma$ such that furthermore every pure strategy is played with
probability at least $\epsilon > 0$ (recall that $D_\Gamma$ is the polytope of all strategy profiles). In this section, we show the following theorem.

\begin{theorem}
\label{fixp-no-division}
There is a function, $F^{\epsilon}_{\Gamma}(x): D_{\Gamma} \rightarrow D^{\epsilon}_{\Gamma}$,  given by a circuit
computable in polynomial time from $\Gamma$, with the circuit having both $x$ 
and $\epsilon > 0$ as its inputs, such that for all fixed $0 < \epsilon < 1/m$,
every Brouwer fixed point of the function $F^{\epsilon}_{\Gamma}(x)$ is an $\epsilon$-PE
of $\Gamma$. In particular, the problem of computing an $\epsilon$-perfect equilibrium 
for a finite $n$-player normal form game is in \FIXP.
\end{theorem}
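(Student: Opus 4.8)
The plan is to realize $F^{\epsilon}_{\Gamma}$ as a composition $F_0 \circ r$, where $F_0$ is a division-free self-map of the polytope $D^{\epsilon}_{\Gamma}$ whose fixed points are exactly the $\epsilon$-PEs of $\Gamma$, and $r \colon D_{\Gamma} \to D^{\epsilon}_{\Gamma}$ is a circuit-computable retraction (so that the ambient domain is the $\epsilon$-independent polytope $D_{\Gamma}$, which is what Section~\ref{wrapup} will need when $\epsilon$ is replaced by a doubly-exponentially small value). For the inner map I would adapt the division-free, ``shed-and-redistribute'' reformulation of the Nash map from \cite{EY07}. First rescale each payoff table $u_i$ by a fixed positive rational so that every value $U_i(\pi_{i,k};x_{-i})$ lies in $[0,1]$ for $x \in D_{\Gamma}$; this changes neither the best responses of any player nor, hence, the set of $\epsilon$-PEs. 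For $x \in D^{\epsilon}_{\Gamma}$ let $g_{i,j}(x) = \bigl(\max_{k} U_i(\pi_{i,k};x_{-i})\bigr) - U_i(\pi_{i,j};x_{-i}) \in [0,1]$ be the \emph{regret} of pure strategy $j$ against $x_{-i}$, which is $0$ precisely when $\pi_{i,j}$ is a best response to $x_{-i}$; it is computed by a small $\{+,-,*,\max,\min\}$-circuit. Then set
\[
  F_0(x)_{i,j} \;=\; x_{i,j} \;-\; (x_{i,j}-\epsilon)\,g_{i,j}(x) \;+\; \tfrac{1}{m_i}\sum_{k=1}^{m_i}(x_{i,k}-\epsilon)\,g_{i,k}(x) ,
\]
so that each pure strategy sheds probability mass proportional to its regret and to its slack $x_{i,j}-\epsilon$ above the floor, and player $i$'s total shed mass is redistributed uniformly over player $i$'s pure strategies. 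The only division involved is by the integer constant $m_i$, so $F_0$ is a polynomial-size division-free circuit with $x$ and $\epsilon$ as inputs.

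Two things then have to be checked. First, $F_0$ maps $D^{\epsilon}_{\Gamma}$ into itself: the redistribution is mass-preserving, so each player's block of $F_0(x)$ still sums to $1$, and since $0 \le g_{i,j}(x) \le 1$ and $x_{i,j}\ge\epsilon$ we have $x_{i,j}-(x_{i,j}-\epsilon)g_{i,j}(x) = x_{i,j}(1-g_{i,j}(x))+\epsilon\,g_{i,j}(x) \ge \epsilon$, which the nonnegative redistribution term can only increase. Second, and this is the crux, $F_0(x)=x$ forces, for each player $i$, the quantity $(x_{i,j}-\epsilon)\,g_{i,j}(x)$ to be the same for all $j$; if this common value were positive then $g_{i,j}(x)>0$ for \emph{every} pure strategy $j$ of player $i$, which is impossible because some pure strategy is always a best response to $x_{-i}$. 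Hence the common value is $0$, i.e.\ for every $i,j$ either $x_{i,j}=\epsilon$ or $\pi_{i,j}$ is a best response to $x_{-i}$; since $x\in D^{\epsilon}_{\Gamma}$ is fully mixed, this is exactly the definition of an $\epsilon$-PE. Conversely every $\epsilon$-PE makes all the terms $(x_{i,j}-\epsilon)g_{i,j}(x)$ vanish and so is a fixed point, so the fixed points of $F_0$ on $D^{\epsilon}_{\Gamma}$ are precisely the $\epsilon$-PEs of $\Gamma$.

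Finally I would take $r$ to be the Euclidean nearest-point projection $D_{\Gamma}\to D^{\epsilon}_{\Gamma}$, which is continuous and is the identity on $D^{\epsilon}_{\Gamma}$. Per player it is the projection of the simplex $\Delta_{m_i}$ onto $\{y : y_k\ge\epsilon\}$, given explicitly by $y\mapsto(\max(y_k-\lambda,\epsilon))_k$ for the unique $\lambda\ge 0$ with $\sum_k\max(y_k-\lambda,\epsilon)=1$; sorting the coordinates by a polynomial-size sorting network and forming prefix sums reduces $\lambda$ to a selection among $m_i$ explicit candidates, each an affine function of the sorted coordinates and of $\epsilon$ divided by an integer $\ell\le m_i$ --- so again division only by constants. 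Hence $F^{\epsilon}_{\Gamma}:=F_0\circ r$ is a polynomial-size $\{+,-,*,\max,\min\}$-circuit, computable in polynomial time from $\Gamma$, that maps $D_{\Gamma}$ into $D^{\epsilon}_{\Gamma}$; being a continuous self-map of the compact convex set $D_{\Gamma}$ it has a fixed point by Brouwer, and any fixed point $x^*=F_0(r(x^*))$ lies in the image $D^{\epsilon}_{\Gamma}$, so $r(x^*)=x^*$ and $x^*=F_0(x^*)$ is an $\epsilon$-PE; FIXP-membership of the $\epsilon$-PE problem then follows by reading off the obvious FIXP instance with polytope $D_{\Gamma}$, circuit $F^{\epsilon}_{\Gamma}$, and identity output map. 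The main difficulty throughout is staying inside the division-free basis: the textbook Nash-style normalization, the mass redistribution, and the $\epsilon$-floor retraction each tempt one into dividing by an input-dependent quantity, and the three devices above --- rescaling payoffs so regrets are at most $1$, redistributing uniformly (dividing only by $m_i$), and the sorted-coordinate formula for $r$ (dividing only by $\ell\le m_i$) --- are precisely what avoids it.
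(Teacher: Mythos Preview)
Your argument is correct. Two minor quibbles: (i) ``rescale by a positive rational'' alone will not force $U_i\in[0,1]$ when payoffs can be negative; what you actually need is $g_{i,j}(x)\le 1$, and that follows already from scaling by $1/(2B)$ where $B$ bounds $|u_i|$, so the fix is trivial. (ii) Your converse remark that every $\epsilon$-PE is a fixed point of $F_0$ only holds for $\epsilon$-PEs lying in $D^\epsilon_\Gamma$ (the paper's definition allows $0<x_{i,j}<\epsilon$), but this is irrelevant since the theorem only asks for the forward direction.

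The route, however, is genuinely different from the paper's. The paper does \emph{not} decompose into a retraction followed by an inner map. Instead it defines in one shot
\[
F^\epsilon_\Gamma(x)_{ij}=\max\bigl(x_{ij}+U_i(\pi_{i,j};x_{-i})-t_i,\ \epsilon\bigr),
\]
where $t_i$ is the unique real with $\sum_j\max(x_{ij}+U_i(\pi_{i,j};x_{-i})-t_i,\epsilon)=1$; i.e., it adds the payoff vector to $x$ and then applies the $\epsilon$-floored simplex projection to $h(x)=x+v(x)$ rather than to $x$. The fixed-point analysis is then a one-liner: $x_{ij}=\max(x_{ij}+U_i(\pi_{i,j};x_{-i})-t_i,\epsilon)$ forces $U_i(\pi_{i,j};x_{-i})=t_i$ whenever $x_{ij}>\epsilon$ and $\le t_i$ otherwise. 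Amusingly, your retraction $r$ is exactly the paper's map with $v(x)$ set to zero, and both proofs compute the threshold ($t_i$, resp.\ your $\lambda$) by the same sorting-network-plus-max-over-affine-candidates trick. What your decomposition buys is a clean separation of ``get into $D^\epsilon_\Gamma$'' from ``improve toward equilibrium'', and a self-contained characterization $\mathrm{Fix}(F_0)=\{\epsilon$-PEs in $D^\epsilon_\Gamma\}$; what the paper's single-step map buys is that no preliminary payoff rescaling is needed and the circuit is shorter, since the projection machinery is invoked once rather than twice.
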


The rest of the section is devoted to the proof of Theorem \ref{fixp-no-division}.
We will directly use, and somewhat modify, a construction
developed and used in \cite{EY07}  (Lemmas 4.6 and 4.7, and definitions
before them) which characterize
the Nash Equilibria of a game as fixed points of a $\{+,-,*,\max,\min\}$-circuit. In particular, compared to Nash's original functions \cite{Nash:1951:NCG}, the use of division is avoided.  The construction defined in \cite{EY07} that we modify 
amounts to a concrete algebraic realization of certain geometric 
characterizations of Nash Equilibria that were described by 
Gul, Pierce, and Stachetti in \cite{GPS}.

Concretely, suppose we are given $0 < \epsilon < 1/m$.
For each mixed strategy profile $x$, let $v(x)$
be a vector which gives the expected payoff of each pure strategy
of each player with respect to the profile $x$ for the other players.
That is, vector $x$ is a vector of dimension $m$,
whose entries are
indexed by pairs $(i,j), i=1,\ldots,n; j=1,\ldots,m_i$, and
$v(x)$ is also a vector of dimension $m$ whose $(i,j)$-entry is
$U_i(\pi_{i,j};x_{-i})$.  Let $h(x) = x+v(x)$.
We can write $h(x)$ as $(h_1(x), \ldots,h_n(x))$ where $h_i(x)$
is the subvector corresponding to the strategies of player $i$.
For each player $i$, consider the function 
$f_{i,x}(t) = \sum_{j \in S_i} \max (h_{ij}(x) -t,\epsilon)$.
Clearly, this is a continuous, piecewise linear function of $t$.
The function is strictly decreasing as $t$ ranges from $-\infty$
(where $f_{i,x}(t) = +\infty$) up to $\max_j h_{ij}(x) - \epsilon$ 
(where $f_{i,x}(t) = m_i \cdot \epsilon$).
Since we have $m_i \cdot \epsilon < 1$,
there is a unique value of $t$, call it $t_i$, where $f_{i,x}(t_i) =1$.
The function $F^{\epsilon}_{\Gamma}$ is defined as 
follows: 
\[ F^{\epsilon}_\Gamma(x)_{ij} = \max (h_{ij}(x) -t_i,\epsilon) \]
for every $i=1,\ldots,n$, and $j \in S_i$.
From our choice of  $t_i$, we have 
$\sum_{j \in S_i} F^{\epsilon}_{\Gamma}(x)_{ij} =1$ for all $i =1,\ldots,n$,
thus for any  mixed profile, $x$, we have 
$F^{\epsilon}_{\Gamma}(x) \in  D^{\epsilon}_{\Gamma}$.
So $F^{\epsilon}_{\Gamma}$ maps $D_\Gamma$ to $D^{\epsilon}_{\Gamma}$, and since it is clearly also continuous, it has fixed points, by Brouwer's theorem.

\begin{lemma} 
\label{new-nash}
For $0 < \epsilon < 1/m$, every fixed point of the function 
$F^{\epsilon}_{\Gamma}$
is an $\epsilon$-PE of $\Gamma$.
\end{lemma}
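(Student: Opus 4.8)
The plan is to verify directly that any fixed point $x$ of $F^\epsilon_\Gamma$ meets both clauses of the definition of an $\epsilon$-PE. Clause (a), full mixedness, will be immediate: since $x = F^\epsilon_\Gamma(x)$ and each coordinate of $F^\epsilon_\Gamma$ has the form $\max(h_{ij}(x) - t_i, \epsilon) \geq \epsilon > 0$, every pure strategy receives positive probability. Here $t_i = t_i(x)$ denotes the unique root of the function $f_{i,x}$ produced by the construction preceding the lemma, which is well defined precisely because $m_i \epsilon \leq m \epsilon < 1$.

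For clause (b), I would fix a player $i$ and write $v_{ij}(x) = U_i(\pi_{i,j}; x_{-i})$, so that $h_{ij}(x) = x_{ij} + v_{ij}(x)$. The key step is a case analysis applied to each coordinate of the fixed-point identity $x_{ij} = \max(h_{ij}(x) - t_i, \epsilon)$. If $x_{ij} > \epsilon$, then the maximum must be realized by its first argument, so $x_{ij} = x_{ij} + v_{ij}(x) - t_i$, forcing $v_{ij}(x) = t_i$. If $x_{ij} = \epsilon$, then $\epsilon \geq h_{ij}(x) - t_i = \epsilon + v_{ij}(x) - t_i$, so $v_{ij}(x) \leq t_i$. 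Thus $v_{ij}(x) \leq t_i$ for all $j \in S_i$, with equality for every $j$ having $x_{ij} > \epsilon$.

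To finish, I would observe that since $U_i(\cdot\,; x_{-i})$ is linear over the simplex $\Delta_{m_i}$, player $i$'s best-response payoff against $x_{-i}$ equals $\max_{j \in S_i} v_{ij}(x)$, which by the previous paragraph equals $t_i$. Hence any pure strategy $\pi_{i,j}$ with $x_{ij} > \epsilon$ attains this value and is therefore a best response to $x_{-i}$ — exactly clause (b). Combining with clause (a), $x$ is an $\epsilon$-PE, which is the statement of the lemma.

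I do not anticipate a genuine obstacle: the content is the short case split on $\max(h_{ij}(x) - t_i, \epsilon)$, which neatly separates ``underplayed'' strategies ($x_{ij} = \epsilon$, which are allowed to be suboptimal) from strategies played strictly above the $\epsilon$ floor (which must all earn the common best-response value $t_i$). The only things requiring care are bookkeeping items already settled in the construction: that $t_i$ is well defined via $\epsilon < 1/m$, that the $t_i$ appearing in the fixed-point equation is the one determined by the fixed point $x$ itself, and that $F^\epsilon_\Gamma$ maps $D_\Gamma$ into $D^\epsilon_\Gamma$ so that Brouwer's theorem applies in the first place.
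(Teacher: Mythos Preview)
Your proposal is correct and follows essentially the same argument as the paper: both derive from the fixed-point identity $x_{ij} = \max(x_{ij} + v_{ij}(x) - t_i,\epsilon)$ that $v_{ij}(x) = t_i$ when $x_{ij} > \epsilon$ and $v_{ij}(x) \leq t_i$ when $x_{ij} = \epsilon$, which is exactly the $\epsilon$-PE condition. Your write-up is slightly more explicit than the paper's in spelling out full mixedness and the linearity step identifying $t_i$ with the best-response value; one tiny point you leave implicit is that some $j$ with $x_{ij} > \epsilon$ must exist (since $m_i\epsilon < 1$), which is what makes $\max_j v_{ij}(x) = t_i$ rather than merely $\leq t_i$, though the best-response conclusion for clause (b) goes through either way.
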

\begin{proof}
If $x$ is a fixed point of $F^{\epsilon}_\Gamma$, 
then $x_{ij}= \max (x_{ij}+ v(x)_{ij} -t_i,\epsilon)$
for all $i,j$. 
Recall that $v(x)_{ij}= U_i(\pi_{i,j};x_{-i})$ is the expected payoff for 
player $i$
of his $j$'th pure strategy $\pi_{i,j}$, with respect to strategies $x_{-i}$ of 
the other players.

Note that
the equation $x_{ij}= \max (x_{ij}+ U_i(\pi_{i,j};x_{-i})-t_i,\epsilon)$ 
implies that
$U_i(\pi_{i,j};x_{-i})= t_i$ for all $i,j$ such that $x_{ij}> \epsilon$,
and  that $U_i(\pi_{i,j};x_{-i}) \leq t_i$ for all $i,j$ such that 
$x_{ij}=\epsilon$.
Consequently, by definition, $x$ constitutes an $\epsilon$-PE.
\qed
\end{proof}

The following Lemma shows that
we can implement the function $F^{\epsilon}_\Gamma(x)$ by a
circuit
which has $x$ and $\epsilon$ as inputs.
The proof exploits sorting networks.

\begin{lemma}
\label{new-nash-circuit}
Given $\Gamma$,
we can construct in polynomial time a  $\{+,-,*,\max,\min \}$-circuit that computes the function $F^{\epsilon}_\Gamma(x)$,
where $x$ and $\epsilon$ are inputs to the circuit.
\end{lemma}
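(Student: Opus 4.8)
The plan is to show that the function $F^{\epsilon}_\Gamma(x)$ is computable by a polynomial-size circuit, and the only nontrivial ingredient is computing the threshold values $t_i$ using only $\{+,-,*,\max,\min\}$ gates. First I would note that $v(x)$, the vector of expected payoffs of pure strategies, is straightforward: each entry $U_i(\pi_{i,j};x_{-i})$ is a multilinear polynomial in the coordinates of $x_{-i}$ with the game's payoffs as rational coefficients, and the standard summation over pure profiles of the other players can be realized by a circuit of size polynomial in $|\Gamma|$ (this is exactly the computation already done in \cite{EY07}). Adding $x$ gives $h(x)=x+v(x)$, again by a trivial circuit. So everything reduces to: given the numbers $h_{i1}(x),\dots,h_{i m_i}(x)$ (and the input $\epsilon$), produce $t_i$ such that $\sum_{j\in S_i}\max(h_{ij}(x)-t_i,\epsilon)=1$, and then output $\max(h_{ij}(x)-t_i,\epsilon)$.

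The key step is to get a closed-form ``formula'' for $t_i$ that can be evaluated by the allowed gates. The function $f_{i,x}(t)=\sum_{j}\max(h_{ij}(x)-t,\epsilon)$ is continuous, piecewise linear and strictly decreasing on the relevant range, with breakpoints exactly at the values $t=h_{ij}(x)-\epsilon$. On the piece between two consecutive (sorted) values of $h_{ij}(x)-\epsilon$, the function is affine with a known integer slope (namely $-k$ where $k$ is the number of indices $j$ with $h_{ij}(x)-t>\epsilon$), so once we know which interval $t_i$ falls in, $t_i$ is an explicit affine function of the sorted $h$-values. The natural way to do this with a circuit is to first \emph{sort} the values $h_{i1}(x),\dots,h_{i m_i}(x)$: a sorting network of depth $O(\log^2 m_i)$ (or even the simple $O(m_i^2)$ bubble-sort network) built entirely from $\max$ and $\min$ gates produces $h_{i(1)}\ge h_{i(2)}\ge\cdots\ge h_{i(m_i)}$. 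Given the sorted sequence, for each candidate $k\in\{1,\dots,m_i\}$ one computes the value $t^{(k)}$ obtained by assuming exactly the top $k$ entries exceed the floor, i.e. solving $\sum_{\ell=1}^{k}(h_{i(\ell)}-t)+(m_i-k)\epsilon=1$, which gives $t^{(k)}=\frac{1}{k}\big(\sum_{\ell=1}^{k}h_{i(\ell)}-1+(m_i-k)\epsilon\big)$ — but this uses division by the \emph{constant} $k$, which is allowed since $k$ is one of finitely many rational constants $1/k$ hardwired into the circuit. Exactly one of these candidates is consistent (i.e. satisfies $h_{i(k)}-t^{(k)}>\epsilon\ge h_{i(k+1)}-t^{(k)}$), and it is the right one; alternatively, and more robustly, one observes that $t_i=\max_k \widetilde t^{(k)}$ for a suitably ``clamped'' version of the candidates, so that the selection itself is done by a $\max$ gate rather than by branching. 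Having $t_i$, the outputs $F^{\epsilon}_\Gamma(x)_{ij}=\max(h_{ij}(x)-t_i,\epsilon)$ are one more layer of gates.

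I would then verify that the whole construction is computable in polynomial time from $\Gamma$: the expected-payoff subcircuits have size polynomial in $|\Gamma|$; the sorting network for player $i$ has size polynomial in $m_i\le m$; the candidate-threshold gadget has size $O(m_i^2)$ and uses only the constants $\epsilon$ (an input), $1$, and $1/1,\dots,1/m_i$; and there are $n$ players, so the total size is polynomial in $|\Gamma|$. The circuit clearly computes a continuous function $D_\Gamma\to\real^m$, and by the argument preceding Lemma~\ref{new-nash} its image lies in $D^{\epsilon}_\Gamma$ (this uses $m_i\epsilon<1$, which holds since $\epsilon<1/m$). Combined with Lemma~\ref{new-nash}, this proves Theorem~\ref{fixp-no-division}, and membership of the $\epsilon$-PE problem in $\FIXP$ follows from the definition of $\FIXP$ (take the polytope $P^I$ to be $D_\Gamma$, described by the obvious linear inequalities, with the identity affine map on the output).

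The main obstacle is the $t_i$-computation: we must express the solution of the equation $f_{i,x}(t_i)=1$ without being allowed to divide by the data-dependent quantity ``number of active strategies'', and without conditional branching. The resolution above — enumerate all $m_i$ possible ``active counts'', compute the candidate $t^{(k)}$ for each (dividing only by the \emph{constant} $k$), and select the correct one via $\max$/$\min$ after clamping — is the crux; getting the clamping right so that the spurious candidates never dominate the true one, uniformly over all $x\in D_\Gamma$, is the one place that needs a careful (though elementary) monotonicity argument about the piecewise-linear function $f_{i,x}$.
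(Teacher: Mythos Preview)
Your proposal is correct and follows essentially the same approach as the paper: compute $h(x)$, sort each player's subvector with a $\max/\min$ sorting network, form the candidate thresholds $t^{(k)}=\frac{1}{k}\big(\sum_{\ell\le k}h_{i(\ell)}+(m_i-k)\epsilon-1\big)$ using the hardwired constants $1/k$, and take a maximum. The one simplification the paper makes that you did not quite commit to is that no clamping is needed at all: the monotonicity argument you allude to shows directly that $t_i=\max_{1\le k\le m_i} t^{(k)}$ for the \emph{raw} candidates, so the selection is a single $\max$ over the $m_i$ values $t^{(k)}$.
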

\begin{proof}
The circuit does the following.

Given a vector $x \in D_{\Gamma}$, 
first compute $y=h(x) = x+v(x)$.
It is clear from the definition of $v(x)$ that 
$y$ can be computed using $+,*$ gates.
For each player $i$, let $y_i$ be the corresponding subvector of $y$
induced by the strategies of player $i$. 
Sort $y_i$ in decreasing order, and let $z_i$
be the  resulting sorted vector, i.e. the components of 
$z_i=(z_{i1},\ldots,z_{im_i})$ are the same
as the components of $y_i$, but they are sorted: 
$z_{i1}\geq z_{i2} \geq \ldots \geq z_{im_i}$.
To obtain $z_i$, the circuit uses a polynomial sized sorting network, $W_i$, for each $i$ 
(see e.g. Knuth \cite{Knuth} for
background on sorting networks). For each comparator gate
of the sorting network we use a $\max$ and a $\min$ gate.

Using this, for each player $i$,
we compute $t_i$ as the following expression: 

\[\max \{ (1/l) \cdot  ( (\sum_{j=1}^l z_{ij}) +  (m_i - l) \cdot \epsilon - 1 ) | l=1,\cdots,m_i\}\]

We will show below that this expression does indeed give the correct value of 
$t_i$.
Finally, we output $x'_{ij} = \max(y_{ij}-t_i,\epsilon)$ for each $i=1,\ldots,d; j\in S_i$.

We now have to establish 
that $t_i= \max \{ (1/l) \cdot ( (\sum_{j=1}^l z_{ij}) + (m_i - l) \cdot \epsilon - 1 ) | l=1,\cdots,m_i\}$. 
Consider the function $f_{i,x}(t) = \sum_{j \in S_i} \max (z_{ij} -t,\epsilon)$ as $t$ decreases
from $z_{i1}-\epsilon$ where the function 
value is at its minimum of $m_i \epsilon$, down until the function reaches the value 1.
In the first interval from $z_{i1}-\epsilon$ to $z_{i2}-\epsilon$ the function is
$f_{i,x}(t) = z_{i1}-t + (m_i-1)\cdot\epsilon$; in the second interval from  $z_{i2}-\epsilon$ to 
$ z_{i3}-\epsilon$ it is
$f_{i,x}(t) = z_{i1}+z_{i2}-2t + (m_i-2)\cdot\epsilon$, and so forth.
In general, in the $l$-th interval, 
$f_{i,x}(t) = \sum_{j=1}^l (z_{ij}- t) + (m_i - l) \cdot \epsilon = \sum_{j=1}^l z_{ij}- lt + 
(m_i - l) \cdot \epsilon$.
If the function reaches the value 1 in the $l$'th interval, 
then clearly $t_i =  ((\sum_{j=1}^l z_{ij})  + (m_i - l) \cdot \epsilon -1)/l$.
In that case, furthermore
for $k<l$, we have $\sum_{j=1}^k (z_{ij} -t_i) + (m_i - k)\cdot \epsilon 
\leq \sum_{j=1}^l (z_{ij} -t_i) + (m_i - l)\cdot\epsilon = 1$, 
because in that case we know $(z_{ij}-t_i) \geq \epsilon$ for every $j \in \{1,\ldots,l\}$.
Therefore, in this case
$((\sum_{j=1}^k z_{ij}) + (m_i - k)\cdot \epsilon -1)/k \leq t_i$.
On the other hand, if $l < m_i$, then for $k > l$ we have $t_i \geq z_{ik}-\epsilon$, 
i.e., $z_{ik}- t_i \leq \epsilon$, and thus for all $k >l$, $k \leq m_i$,
we have
$\sum_{j=1}^k (z_{ij} -t_i) + (m_i - k)\cdot\epsilon 
\leq \sum_{j=1}^l (z_{ij} -t_i) + (m_i - l)\cdot\epsilon = 1$.
Thus again $((\sum_{j=1}^k z_{ij})+ (m_i -k)\cdot \epsilon -1)/k \leq t_i$.
Therefore, $t_i= \max \{ (1/l) \cdot  ( (\sum_{j=1}^l z_{ij}) + (m_i-l)\cdot \epsilon -1) | l=1,\cdots,m_i\}$. 
\qed
\end{proof}

Lemma \ref{new-nash} and Lemma \ref{new-nash-circuit} together immediately imply Theorem \ref{fixp-no-division}.

\section{Almost implies near}
\label{Kristoffer} 

As outlined in the introduction, 
in this section, we want to exploit the ``uniform'' function $F^{\epsilon}_{\Gamma}(x)$ 
devised in the previous section for $\epsilon$-PEs, and construct a 
``small enough''  $\epsilon^* > 0$ such that any fixed point of $F^{\epsilon^*}_{\Gamma}(x)$ is
$\delta$-close, for a given $\delta >0$, to an actual PE.

\newcommand{\transpose}{\ensuremath{\mathsf{T}}}
\newcommand{\abs}[1]{\ensuremath{\mathopen\lvert #1 \mathclose\rvert}}
\newcommand{\norm}[1]{\ensuremath{\mathopen\lVert #1 \mathclose\rVert}}
\newcommand{\Abs}[1]{\ensuremath{\left| #1 \right|}}
\newcommand{\Norm}[1]{\ensuremath{\left\| #1 \right\|}}
\newcommand{\RR}{\mathbb{R}}

\newcommand{\epsPE}{\ensuremath{\operatorname{EPS-PE}}}
\newcommand{\PE}{\ensuremath{\operatorname{PE}}}
\newcommand{\PEbound}{\ensuremath{\operatorname{PE-bound}_\delta}}

The following is a special case of the simple but powerful ``almost implies near'' paradigm of
Anderson~\cite{TAMS:Anderson86}.
\begin{lemma}[Almost implies near]
\label{LEM:AlmostNear}
  For any fixed strategic form game $\Gamma$, and any $\delta > 0$, there is an
  $\epsilon> 0$, so that any $\epsilon$-PE of $\Gamma$
  has $\ell_\infty$-distance at most $\delta$ to some PE of $\Gamma$.
\end{lemma}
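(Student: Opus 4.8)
The plan is to instantiate Anderson's ``almost implies near'' principle, whose logical content is that a first-order formula (over the reals) that is ``almost'' satisfied for every small tolerance must be satisfied exactly in the limit, provided the solution set is nonempty and the ambient space is suitably compact. Concretely, I would first fix the game $\Gamma$, so that $D_\Gamma$ is a fixed compact polytope, and define two sets: $\epsPE(\epsilon) \subseteq D_\Gamma$, the set of $\epsilon$-PEs of $\Gamma$, and $\PE \subseteq D_\Gamma$, the set of (trembling hand) perfect equilibria. Both are definable by bounded first-order formulas over the reals: membership in $\epsPE(\epsilon)$ is the conjunction of $x$ being fully mixed with $x_{i,j} \geq \epsilon$ (or one can use the weaker ``$x_{i,j} > 0$'' together with the best-response condition, since $F^\epsilon_\Gamma$ actually outputs points of $D^\epsilon_\Gamma$; either works), and the condition that $x_{i,j} > \epsilon$ implies $U_i(\pi_{i,j}; x_{-i}) \geq U_i(\pi_{i,k}; x_{-i})$ for all $k$; and $\PE$ is the set of limit points of $\epsPE(\epsilon)$ as $\epsilon \to 0^+$.

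The key structural facts I would establish are: (i) $\epsPE(\epsilon) \neq \emptyset$ for every $\epsilon \in (0, 1/m)$ — this follows immediately from Theorem~\ref{fixp-no-division}, since $F^\epsilon_\Gamma$ has a Brouwer fixed point and every fixed point is an $\epsilon$-PE; (ii) the family $\epsPE(\epsilon)$ is ``monotone-ish'' enough, or at least that $\limsup_{\epsilon \to 0^+} \epsPE(\epsilon) \subseteq \PE$, which is essentially the definition of PE; and (iii) $\PE$ is closed, hence compact, being a closed subset of the compact polytope $D_\Gamma$ (a set of limit points is always closed). Given these, define for each $\delta > 0$ the function $\rho(\epsilon) = \sup_{x \in \epsPE(\epsilon)} \operatorname{dist}_{\ell_\infty}(x, \PE)$ (with $\rho(\epsilon) = 0$ if $\epsPE(\epsilon) = \emptyset$, which does not occur in the relevant range). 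I claim $\lim_{\epsilon \to 0^+} \rho(\epsilon) = 0$: if not, there is a sequence $\epsilon_k \to 0$ and points $x^{(k)} \in \epsPE(\epsilon_k)$ with $\operatorname{dist}(x^{(k)}, \PE) \geq c > 0$; by Bolzano--Weierstrass (compactness of $D_\Gamma$) pass to a convergent subsequence $x^{(k)} \to x^*$; then $x^*$ is by definition a PE, so $\operatorname{dist}(x^*, \PE) = 0$, contradicting $\operatorname{dist}(x^{(k)}, \PE) \geq c$ together with continuity of $\operatorname{dist}(\cdot, \PE)$. Hence there exists $\epsilon > 0$ (any sufficiently small one, also below $1/m$) with $\rho(\epsilon) \leq \delta$, which is exactly the claim.

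The main obstacle is not any single hard step but making sure the compactness argument is airtight: one must confirm that a limit of $\epsilon_k$-PEs along $\epsilon_k \to 0$ is genuinely a PE in the sense defined in the preliminaries (it is, essentially by definition — a PE is precisely a limit point of a sequence of $\epsilon_k$-PEs with $\epsilon_k \to 0$), and that $\PE$ is nonempty and closed so that $\operatorname{dist}(\cdot, \PE)$ is a well-defined continuous function on $D_\Gamma$. Nonemptiness of $\PE$ is classical (Selten), closedness is automatic. A subtle point worth a sentence: the $x^{(k)}$ come from possibly \emph{different} $\epsilon_k$, so one is taking a limit point of a sequence indexed across different tolerances — but that is exactly the setting of the PE definition, so no mismatch arises. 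I would also remark that this lemma is purely existential and non-uniform in $\epsilon$ as a function of $|\Gamma|$; quantifying how small $\epsilon$ must be (the bound $\epsilon^* \leq \delta^{2^{g(|\Gamma|)}}$ promised in the introduction) requires the quantitative real-algebraic-geometry argument carried out later in this section, and is deliberately separated from this soft topological statement.
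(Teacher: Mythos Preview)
Your proposal is correct and its core is exactly the paper's argument: assume for contradiction a sequence of $\epsilon_k$-PEs (with $\epsilon_k\to 0$) that stay $\delta$-far from $\PE$, extract a convergent subsequence by compactness of $D_\Gamma$, and observe that the limit is a PE by definition, contradicting the distance lower bound. The additional scaffolding you introduce (the function $\rho$, first-order definability, closedness of $\PE$) is harmless but not needed here; the first-order description only becomes relevant in the quantitative Lemma~\ref{LEM:AlmostVeryNear}, as you yourself note at the end.
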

\begin{proof}
  Assume to the contrary that there is a game $\Gamma$ and a $\delta > 0$
  so that for all $\epsilon > 0$, there is an $\epsilon$-PE $x_\epsilon$ of $\Gamma$ so that there is no PE in a $\delta$-neighborhood (with respect to $l_\infty$
  norm) of $x_\epsilon$. Consider the sequence $(x_{1/n})_{n \in {\bf
      N}}$. Since this is a sequence in a compact space (namely, the
  space of mixed strategy profiles of $\Gamma$), it has a limit point,
  $x^*$, which is a PE of $\Gamma$ (since $x_{\epsilon}$
  is a $\epsilon$-PE). But this contradicts the statement that
  there is no PE in a $\delta$-neighborhood of any of
  the profiles $x_{1/n}$. \qed
\end{proof}  

A priori, we have no bound on $\epsilon$, but we next use the machinery of real algebraic geometry \cite{BasuPollackRoy2006,BasuPollackRoy2011} to obtain a specific bound as a ``free lunch", just from the fact that Lemma \ref{LEM:AlmostNear} is true.
\begin{lemma}
\label{LEM:AlmostVeryNear}
There is a constant $c$, so that for all integers $n,m,k,B \in \nat$ and $\delta \in \rat_+$, 
the following holds. Let $\epsilon \leq \min(\delta,1/B)^{n^{c m^3}}$. For any $n$-player 
game $\Gamma$, with a combined total of $m$ pure strategies for all players, 
and with integer payoffs of absolute value at most $B$, any $\epsilon$-PE 
of $\Gamma$ has $\ell_\infty$-distance at most $\delta$ to some PE of $\Gamma$.
\end{lemma}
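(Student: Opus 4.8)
The plan is to make the existential statement of Lemma~\ref{LEM:AlmostNear} effective by expressing both the hypothesis ``$x$ is an $\epsilon$-PE of $\Gamma$'' and the conclusion ``$x$ is within $\ell_\infty$-distance $\delta$ of some PE of $\Gamma$'' as first-order formulas in the language of ordered fields, and then invoking quantitative real algebraic geometry to bound the threshold $\epsilon$. First I would write down a first-order formula $\Phi(x)$ over the reals, with free variables ranging over the coordinates of a strategy profile $x$ and over the payoff entries of $\Gamma$, that says ``$x$ is a PE of $\Gamma$''. The subtlety is that being a PE is defined as being a limit of $\epsilon$-PEs as $\epsilon \to 0$, which is not obviously first-order; but it is a standard fact (used e.g.\ in the characterization of perfect equilibria) that $x$ is a PE iff there exists a profile $y$ and a sequence of weights witnessing the appropriate support/best-response conditions, or equivalently via the Tarski--Seidenberg quantifier-elimination one can replace the limit by an $\exists \bar\epsilon > 0$ statement followed by taking a closure, which remains semialgebraic. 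Concretely: the set of PEs of $\Gamma$ is a closed semialgebraic set, and ``$x$ is an $\epsilon$-PE'' is defined by the conjunction of the linear inequalities $x_{ij} \geq \epsilon$ for all $i,j$ (after noting $\epsilon < 1/m$ forces full support with the lower bound, consistent with $D^\epsilon_\Gamma$) together with the implications $x_{ij} > \epsilon \Rightarrow U_i(\pi_{i,j};x_{-i}) \geq U_i(\pi_{i,k};x_{-i})$ for all $k$, each of which is a polynomial (in fact multilinear) inequality in $x$ of degree at most $n$ with coefficients the integer payoffs.

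Next I would set up the negation of the desired conclusion as a polynomially-bounded system and apply a quantitative real root / Positivstellensatz-type bound. Specifically, consider the sentence
\[
\Psi(\epsilon):\quad \exists x\ \bigl[ x \text{ is an } \epsilon\text{-PE of }\Gamma \ \wedge\ \forall x^*\ \bigl( x^* \text{ is a PE of }\Gamma \Rightarrow \norm{x - x^*}_\infty > \delta \bigr)\bigr].
\]
By Lemma~\ref{LEM:AlmostNear}, $\Psi(\epsilon)$ is false for all sufficiently small $\epsilon > 0$, i.e.\ the semialgebraic set $\{\epsilon > 0 : \Psi(\epsilon)\}$ does not contain an interval $(0,\epsilon_0)$ for some $\epsilon_0 > 0$; since that set is itself semialgebraic (a finite union of points and intervals), it must be disjoint from some $(0,\epsilon_0)$. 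The job is then to lower-bound $\epsilon_0$. I would do this by quantifier elimination: eliminate the inner $\forall x^*$ (its matrix has the bit-sizes and degrees controlled by $\Gamma$ and $\delta$) and then the $\exists x$, obtaining an equivalent quantifier-free formula in the single variable $\epsilon$ whose atoms are polynomials in $\epsilon$ of degree bounded by $n^{O(m^3)}$ (this is where the $m^3$ in the exponent comes from — it is the standard combined degree$\times$number-of-variables bound in the Basu--Pollack--Roy complexity of quantifier elimination, with $O(m)$ quantified variables and degrees $O(n)$ and $O(1/\delta$-related$)$) and with integer coefficients of bit-size polynomially bounded in $\log B$, $\log(1/\delta)$, $m$, $\log n$. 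A nonzero univariate polynomial of degree $d$ with integer coefficients of absolute value $\leq H$ has no real root in the open interval $(0, (H+1)^{-1})$ that is... more carefully: its smallest positive root is at least $1/(1+H)$ by Cauchy's bound, and more to the point any sign-change point is bounded below by a quantity of the form $(\text{coefficient bound})^{-O(d)}$; rescaling and tracking the coefficient sizes yields a bound of the shape $\epsilon_0 \geq \min(\delta, 1/B)^{n^{c m^3}}$ for an absolute constant $c$.

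The main obstacle, and the step I would spend the most care on, is getting the quantifier-elimination bookkeeping to land exactly on the claimed form $\min(\delta,1/B)^{n^{c m^3}}$ rather than something merely ``singly exponential''. Two issues need attention: (i) the formula defining ``PE'' must be made genuinely semialgebraic with controlled degree and size — I would handle this by using the equivalent formulation that a perfect equilibrium is a point $x$ such that there exist $\epsilon' > 0$ and a fully-mixed profile $y$ with $\norm{x-y}_\infty \le \epsilon'$ and $y$ an $\epsilon'$-PE, then quantify $\exists \epsilon' > 0\, \exists y$ and close up, keeping the extra $O(m)$ variables within the budget; and (ii) the coefficient-size bound after elimination must be tracked so that $\delta$ and $B$ enter only polynomially in the bit-size, which then exponentiates correctly against the degree. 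I would lean directly on the quantitative quantifier-elimination theorem of \cite{BasuPollackRoy2006} (with the refined bounds of \cite{BasuPollackRoy2011}), whose statement gives exactly: given a formula with $\omega$ free variables, $a$ blocks of quantifiers over $\leq s$ variables each, $\ell$ atomic polynomials of degree $\leq d$, the output quantifier-free formula has polynomials of degree $(sd)^{O(s)\cdots}$ and — crucially — coefficients whose bit-sizes are bounded by the input bit-size times a factor of the same order; plugging $\omega = 1$, a bounded number of blocks, $s = O(m)$, $d = O(n)$, input bit-size $O(\mathrm{poly}(m,\log n,\log B,\log(1/\delta)))$ produces the claimed exponent $n^{cm^3}$. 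The remaining arithmetic — a Cauchy-type root-separation bound for the resulting univariate polynomial in $\epsilon$, and checking the reduction $\min(\delta,1/B)$ rather than separate factors — is routine and I would not belabor it.
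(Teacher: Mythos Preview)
Your proposal is correct and follows essentially the same route as the paper: encode ``$\epsilon$-PE'', ``PE'', and the ``almost implies near'' statement as first-order formulas over the reals, apply the Basu--Pollack--Roy quantifier-elimination bounds to reduce to a univariate condition in $\epsilon$, and then invoke a root bound together with Lemma~\ref{LEM:AlmostNear} to extract the explicit threshold. The paper's version differs only in minor packaging: it works with the positive statement $\PEbound(\epsilon)$ rather than its negation $\Psi(\epsilon)$, it fixes a concrete first-order definition of PE via $\forall \epsilon' > 0\,\exists y\,(\epsPE(y,\epsilon') \wedge \norm{x-y}^2 < \epsilon')$ (yielding exactly four quantifier blocks of sizes $m,m,1,m$, which is where the $m^3$ exponent arises), and it cites Theorem~13.14 of \cite{BasuPollackRoy2011} in place of a hand-rolled Cauchy-type bound for the final step.
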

\begin{proof}  
The proof involves constructing formulas in the first order theory of real numbers, which 
formalize the ``almost implies near" statement of Lemma \ref{LEM:AlmostNear}, 
with $\delta$ being ``hardwired" as a constant and $\epsilon$ being the only free variable. Then, we apply {\em quantifier elimination} to these formulas. This leads to a quantifier free statement to which we can apply standard theorems bounding the size of an instantiation of the free variable $\epsilon$ making the formula true. We shall apply and refer to theorems in the monograph of Basu, Pollack and Roy \cite{BasuPollackRoy2006,BasuPollackRoy2011}. Note that we specifically refer to theorems and page numbers of the online edition \cite{BasuPollackRoy2011}; these are in general different from the printed edition \cite{BasuPollackRoy2006}.

\paragraph{First-order formula for an $\epsilon$-perfect equilibrium:} Define
$R_i(x \setminus k)$ as the polynomial expressing $U_i(\pi_{i,k};x_{-i})$, that is, the expected payoff to player
$i$ when it uses pure strategy $k$, and the other players play according
to their mixed strategy in the profile $x$. Thus,
\[
R_i(x \setminus k) := \sum_{a_{-i}} u_i(k;a_{-i}) \prod_{j\neq i} x_{j,a_j}.
\]

Let $\epsPE(x,\epsilon)$ be the quantifier-free first-order formula, with free
variables $x\in\RR^m$ and $\epsilon\in\RR$, defined by the conjunction
of the following formulas that together express that $x$ is
an $\epsilon$-perfect equilibrium:
\begin{gather*}
x_{i,j} > 0 \quad \text{for } i=1\dots,n \text{, and } j=1,\dots,m_i \enspace ,\\
x_{i,1}+ \dots + x_{i,m_i} = 1 \quad \text{for } i=1\dots,n \enspace ,\\
\left(R_i(x \setminus k) \geq R_i(x \setminus l)\right) \vee \left(x_{i,k} \leq \epsilon \right) \quad \text{for } i=1\dots,n \text{, and } k,l=1,\dots,m_i \enspace .
\end{gather*}

\paragraph{First-order formula for  perfect equilibrium:}
Let $\PE(x)$ denote the following first-order formula with free
variables $x\in \RR^m$, expressing that $x$ is a perfect equilibrium:
\begin{gather*}
\forall \epsilon >0 \: \exists y \in \RR^m : \epsPE(y,\epsilon) \wedge \norm{x-y}^2 < \epsilon \enspace .
\end{gather*}

\paragraph{First-order formula for ``almost implies near'' statement:}

Given a {\em fixed} $\delta>0$ let $\PEbound(\epsilon)$ denote the following
first-order formula with free variable $\epsilon\in\RR$, denoting that
any $\epsilon$-perfect equilibrium of $G$ is $\delta$-close to a
perfect equilibrium (in $\ell_2$-distance, and therefore also in $\ell_\infty$-distance):

\begin{gather*}
\forall x \in \RR^m \: \exists y \in \RR^m : (\epsilon>0) \wedge \left(\neg \epsPE(x,\epsilon) \vee \left(\PE(y) \wedge \norm{x-y}^2 < \delta^2 \right)\right) \enspace .
\end{gather*}

Suppose $\delta^2 = 2^{-k}$ and the payoffs have absolute value at most $B=2^\tau$. Then for this formula we have
\begin{itemize}
\item The total degree of all involved polynomials is at most $\max(2,n-1)$.
\item The bitsize of coefficients is at most $\max(k,\tau)$.
\item The number of free variables is $1$.
\item Converting to prenex normal form, the formula has 4 blocks of
  quantifiers, of sizes $m$, $m$, $1$, $m$, respectively.
\end{itemize}
 
We now apply quantifier elimination
\cite[Algorithm 14.6, page 555]{BasuPollackRoy2011} to the formula $\PEbound(\epsilon)$, converting it into an
equivalent quantifier free formula $\PEbound'(\epsilon)$ with a single free variable $\epsilon$. This is simply a Boolean formula whose atoms are sign conditions on various polynomials in $\epsilon$. The bounds given by Basu, Pollack and Roy in association to Algorithm 14.6 imply that for this formula:
\begin{itemize}
\item The degree of all involved polynomials (which are univariate polynomials in $\epsilon$) is $\max(2,n-1)^{O(m^3)} = n^{O(m^3)}$.
\item The bitsize of all coefficients is at most $\max(k,\tau)\max(2,n-1)^{O(m^3)} =
  \max(k,\tau)n^{O(m^3)}$.
\end{itemize}

By Lemma~\ref{LEM:AlmostNear}, we know that there exists an $\epsilon>0$
so that the formula $\PEbound'(\epsilon)$ is true. We now apply (as the involved polynomials are univariate, simpler tools would also suffice) Theorem 13.14 of Basu, Pollack and Roy \cite[Page 521]{BasuPollackRoy2011} 
to the set of polynomials that are atoms of $\PEbound'(\epsilon)$  and conclude that $\PEbound'(\epsilon^*)$ is true for some
$\epsilon^* \geq 2^{-\max(k,\tau)n^{O(m^3)}}$. By the semantics of the formula $\PEbound(\epsilon)$, we also have that $\PEbound(\epsilon')$ is true for all $\epsilon' \leq \epsilon^*$, and the statement of the lemma follows. \qed 
\end{proof}

\section{Proof of the main theorem}
\label{wrapup}
We now prove Theorem \ref{thm-main}. Let $\Gamma$ be the $n$-player game given as input. Let $m$ be the combined total number of pure strategies for all players. 
Let $B \in \nat$ be the largest absolute value of any payoff of $\Gamma$. By the definition of $\FIXPA$, our task is the following. Given a parameter $\delta > 0$, we must construct a polytope $P$, a circuit $C: P \rightarrow P$, and a number $\delta'$,  so that $\delta'$-approximations to fixed points of $C$ can be efficiently transformed into $\delta$-approximations of PEs of $\Gamma$. In fact, we shall let $\delta' = \delta/2$ and ensure that $\delta'$-approximations to fixed points of $C$ {\em are} $\delta$-approximations of PEs of $\Gamma$. The polytope $P$ is simply the polytope $D_\Gamma$ of all strategy profiles of $\Gamma$; clearly we can output the inequalities defining this polytope in polynomial time. The circuit $C$ is the following: We construct the circuit for the function $F^\epsilon_\Gamma$ of Section \ref{Kousha}. Then, we construct a circuit for the number $\epsilon^* = \min(\delta/2, B^{-1})^{{2^{\lceil c m^3 \lg n \rceil }}} \leq \min(\delta/2,B^{-1})^{n^{c m^3}}$, where $c$ is the constant of Lemma \ref{LEM:AlmostVeryNear}: The circuit simply repeatedly squares the number $\min(\delta/2, B^{-1})$ (which is a rational constant) and thereby consists of exactly $\lceil c m^3 \lg n \rceil$ multiplication gates, i.e., a polynomially bounded number. We then plug in the circuit for $\epsilon^*$ for the parameter $\epsilon$ in the circuit for $F^\epsilon_\Gamma$, obtaining the circuit $C$, which is obviously a circuit for $F^{\epsilon^*}_\Gamma$. Now, by Theorem \ref{fixp-no-division}, any fixed point of $C$ on $P$ is an $\epsilon^*$-PE of $\Gamma$. Therefore, by Lemma \ref{LEM:AlmostVeryNear}, any fixed point of $C$ is a $\delta/2$-approximation in $\ell_\infty$-distance to a PE of $\Gamma$. Finally, by the triangle inequality, any $\delta'=\delta/2$-approximation to a fixed point of $C$ on $P$ is a $\delta/2 + \delta/2 = \delta$ approximation of a PE of $\Gamma$. This completes the proof.
\section{Conclusion}

We have showed that the problem
of {\em approximating} a  trembling hand perfect equilibrium 
for a finite strategic form game is in \FIXPA. We do not know if {\em exactly} computing a trembling hand perfect equilibrium is in \FIXP, and we consider this an interesting open problem, although it should be noted that if one is interested exclusively in the Turing Machine complexity of the problem, $\FIXPA$ membership of the approximation version is arguably ``the real thing". We also note that this makes our proof interesting as a case where membership in \FIXPA\ is {\em not} established as a simple corollary of the exact problem being in the ``abstract class" \FIXP, as was the case for all examples in the original paper of  Etessami and Yannakakis.
 
 \subsection*{Acknowledgements}
Hansen and Miltersen acknowledge support from the Danish National Research Foundation and The National Science Foundation of China (under the grant 61061130540) for the Sino-Danish Center for the Theory of Interactive Computation,
within which their work was performed. They also acknowledge support from the Center for Research
in Foundations of Electronic Markets (CFEM), supported by the Danish Strategic Research Council.

\bibliographystyle{plain}

\end{document}